\newtheorem*{theorem*}{Theorem}
\newtheorem{theorem}{Theorem}[section]
\newtheorem{proposition}[theorem]{Proposition}
\newtheorem{corollary}[theorem]{Corollary}
\newtheorem{lemma}[theorem]{Lemma}
\theoremstyle{remark}
\newtheorem{remark}[theorem]{Remark}
\begin{document}

\title[Distances to codewords of a GRM code of order $1$]
{Hamming distances from a function to all codewords of a Generalized 
Reed-Muller code of order one}

\author{Miriam Abd\'on}
\address{IME, Univ. Federal Fluminense, Rua M´ario Santos Braga s/n, CEP 24.020-140, 
Niter´oi, Brazil}
\email{miriam@mat.uff.br}

\author{Robert Rolland}
\address{Universit\'e d'Aix-Marseille,Institut de Math\'{e}matiques
de Marseille, case 907, F13288 Marseille cedex 9, France}
\email{robert.rolland@acrypta.fr}
\date{\today}
\keywords{Reed-Muller code, Hamming distance, arrangement of hyperplanes}
\subjclass[2000]{11T71, 94B05}

\begin{abstract} 
For any finite field ${\mathbb F}_q$ with $q$ elements, we study the 
set ${\mathcal F}_{(q,m)}$ of functions from ${\mathbb F}_q^m$ into ${\mathbb F}^q$.
We introduce a transformation that allows us to determine
a linear system of $q^{m+1}$ equations and $q^{m+1}$
unknowns, which has for solution the Hamming distances of a function in ${\mathcal F}_{(q,m)}$
to all the affine functions. 
\end{abstract}

\maketitle

\section{Introduction}\label{s1}
\subsection{Generalized Reed-Muller codes of order $1$}
Let ${\mathbb F}_{q}$ be the finite field with $q$ elements. For any integer
$m\ge 1$, we will identify ${\mathbb F}_{q^m}$ with ${\mathbb F}_q^m$ as follows:
consider a basis $\{e_1,\cdots,e_m\}$ of ${\mathbb F}_{q^m}$ over ${\mathbb F}_q$,
then an element $u \in {\mathbb F}_{q^m}$ will be identified with the
vector $(u_1,\cdots,u_m)\in {\mathbb F}_q^m$ if and only if,
$u=\sum_{i=1}^{m}u_ie_i$.

If $u=(u_1,\cdots,u_m)$ and $v=(v_1,\cdots,v_m)$, are two elements
of ${\mathbb F}_q^m$ we will denote by $u.v$ their product in the field
${\mathbb F}_{q^m}$ and by $\langle u,v \rangle$ their scalar product
$$\langle u,v \rangle = \sum_{i=1}^m u_iv_i.$$

We denote by ${\mathcal F}_{(q,m)}=\{f:{\mathbb F}_q^m \to {\mathbb F}_{q} \}$ the set of
functions from ${\mathbb F}_q^m$ to ${\mathbb F}_{q}$. Each function 
$f \in {\mathcal F}_{(q,m)}$ can be identified with its image
$\left(\strut f(u)\right)_{u\in {\mathbb F}_q^m}$. We know that these  functions are polynomial functions
of $m$ variables. The kernel of the map  which associates to any polynomial the corresponding 
polynomial function is the ideal $I$ generated by the $m$ polynomials $X_i^q - X_i$.
The reduced polynomials are the polynomials $P(X_1,\cdots, X_m)$ such that for each $i$, 
the partial degree $\deg_i(P(X_1,\cdots, X_m))$
of  $P(X_1,\cdots, X_m)$ with respect to the variable $X_i$ is $\leq q-1$.
Then for any $f \in {\mathcal F}_{(q,m)}$ there exists an unique reduced polynomial 
$P(X_1,\cdots, X_m)$
for which $f$ is the associated polynomial function. The total degree of
$P(X_1,\cdots, X_m)$ is called the degree of $f$ and denoted by $\deg(f)$.

With these notations, the Generalized Reed-Muller code of order $1$
is the set
$$
RM^{(1)}_{(q,m)}=\{(g(u))_{u \in {\mathbb F}_q^m}~\mid~ g \in
{{\mathcal F}}_{(q,m)} \hbox{ and } \deg(g) \leq 1  \}.
$$

If $f, g \in {\mathcal F}_{(q,m)}$, the Hamming distance between these two
functions is defined by
$$
d(f,g)={\rm card}\left(\{u \in {\mathbb F}_{q}^m \mid f(u)\neq g(u)\}\right).
$$

\subsection{Organization of the article}

In this article we study the Hamming distances from a function $f\in {\mathcal F}_{(q,m)}$
to all the codewords $g \in RM^{(1)}_{(q,m)}$.

\section{An adapted transform}\label{sec:transform}

It is known that
every codeword $g \in RM^{(1)}_{(q,m)}$ can be characterized by a pair
$(v, t) \in {\mathbb F}_q^m \times {\mathbb F}_{q}$ in the following sense:

$$
g(u)=\langle u,v \rangle +t \quad \quad \forall u \in {\mathbb F}_q^m .
$$

If $f \in {\mathcal F}_{(q,m)}$ and $g$ as above, we have that
$$
d(f,g)={\rm card}\left(\{u \in {\mathbb F}_{q}^m \mid f(u)\neq \langle u,v \rangle
+t\}\right)=q^m-N_{v,t}(f),
$$
where $N_{v,t}(f)={\rm card}\left(\{u \in {\mathbb F}_{q}^m \mid f(u)= \langle u,v \rangle
+t\}\right)$.

Now the problem is to study the integer numbers $N_{(v,t)}(f)$. In order
to do that, we will introduce a transform on the group algebra ${\mathbb C}{\mathbb F}_{q}$
of the additive group ${\mathbb F}_q$
over the complex field ${\mathbb C}$
which is quite similar to a Fourier Transform.

More precisely, ${\mathbb C} {\mathbb F}_q$ is the algebra of formal linear combinations with coefficients
in ${\mathbb C}$
$$\sum_{t \in {\mathbb F}_q} \alpha _t Z^t$$
where the operations are defined by
$$\sum_{t \in {\mathbb F}_q} \alpha _t Z^t+\sum_{t \in {\mathbb F}_q} \beta _t Z^t=
\sum_{t \in {\mathbb F}_q} (\alpha _t+\beta _t) Z^t,$$
$$\lambda(\sum_{t \in {\mathbb F}_q} \alpha _t Z^t)=
\sum_{t \in {\mathbb F}_q} (\lambda \alpha _t) Z^t,$$
$$(\sum_{t \in {\mathbb F}_q} \alpha _t Z^t)(\sum_{t \in {\mathbb F}_q} \beta _t Z^t)=
\sum_{t \in {\mathbb F}_q} \biggl (\sum_{r+s=t}(\alpha _r \beta _s)\biggr )
Z^t.$$

\medskip

Let  ${{\mathcal G}}_{(q,m)}$ be the algebra of functions from ${\mathbb F}_q^m$ (or
from ${\mathbb F}_{q^m}$) into ${\mathbb C} {\mathbb F}_q$. 
It is a vector space of dimension $q^{m+1}$ over ${\mathbb C}$.
Let us define an order on ${\mathbb F}_{q^m} \times {\mathbb F}_q$ 
and define the family $(e_{u,t})_{(u,t)\in {\mathbb F}_{q^m} \times {\mathbb F}_q}$
of elements of ${{\mathcal G}}_{(q,m)}$ where
\begin{equation}\label{base}
e_{u,t}(v)=
\left \{
\begin{array}{lll}
0 & \hbox{ if } & v \neq u \\
Z^t & \hbox{ if } & v= u 
\end{array}
\right.
\end{equation}
This family is a basis of ${{\mathcal G}}_{(q,m)}$ and has $q^{m+1}$ elements. 

\medskip

Define the operator $T_{(q,m)}$ of the ${\mathbb C}$-vector space
${{\mathcal G}}_{(q,m)}$ by
$$T_{(q,m)}(\phi)(v) =\sum_{u \in {\mathbb F}_q^m}\phi(u) Z^{-\langle u,v \rangle}.$$

\begin{remark}
In the case where the function $\phi$ is given by
$\phi(v)=Z^{f(v)}$ for some $f \in {\mathcal F}_{(q,m)}$, then the
transform introduced above is the same that the the one introduced
by Ashikhmin and Litsyn (see \cite{a-l}). We recall here some
basic properties of this transform, for more details see
\cite{rol}.
\end{remark}

\begin{lemma}\label{evt}
The transform of $e_{u,t}$ by $T_{(q,m)}$ is given by
$$\epsilon_{u,t}(v)=T_{(q,m)}(e_{u,t})(v)=\sum_{w\in{\mathbb F}_q^m}e_{u,t}(w) Z^{-\langle
w,v \rangle}=Z^{t-\langle
u,v \rangle},$$
then
$$\epsilon_{u,t}=\sum_{(v,\tau)\in E_{-u,t}}e_{v,\tau},$$
where $E_{-u,t}$ is the hyperplane of ${\mathcal F}_{(q,m)}\times {\mathcal F}_q$ defined by
$$E_{-u,t}=\{ (v,\tau)\in {\mathbb F}_{q^m}\times {\mathbb F}_q ~|~ \tau=t-<u,v>\}.$$
\end{lemma}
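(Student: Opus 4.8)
The plan is to prove both assertions by directly unwinding the definition of the operator $T_{(q,m)}$ together with the defining property \eqref{base} of the basis vectors $e_{u,t}$. First I would establish the closed form for $\epsilon_{u,t}(v)$. Plugging $\phi=e_{u,t}$ into the definition $T_{(q,m)}(\phi)(v)=\sum_{w\in{\mathbb F}_q^m}\phi(w)Z^{-\langle w,v\rangle}$ and using that $e_{u,t}(w)$ vanishes for all $w\neq u$ while $e_{u,t}(u)=Z^t$, the sum over $w$ collapses to the single surviving term $Z^t\cdot Z^{-\langle u,v\rangle}$. Since multiplication in ${\mathbb C}{\mathbb F}_q$ satisfies $Z^rZ^s=Z^{r+s}$ (the convolution product, with $r+s$ computed in the additive group ${\mathbb F}_q$), this term equals $Z^{t-\langle u,v\rangle}$, which is the claimed value.

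Next I would expand $\epsilon_{u,t}$ in the basis $(e_{v,\tau})_{(v,\tau)\in{\mathbb F}_{q^m}\times{\mathbb F}_q}$. A general element $\psi\in{\mathcal G}_{(q,m)}$ with coordinates $(c_{v,\tau})$ in this basis, that is $\psi=\sum_{(v,\tau)}c_{v,\tau}e_{v,\tau}$, satisfies $\psi(v_0)=\sum_{\tau\in{\mathbb F}_q}c_{v_0,\tau}Z^\tau$ for every $v_0\in{\mathbb F}_{q^m}$, because $e_{v,\tau}(v_0)=0$ unless $v=v_0$. Comparing this with the value $\epsilon_{u,t}(v_0)=Z^{t-\langle u,v_0\rangle}$ obtained above, which is a single monomial in $Z$, forces $c_{v_0,\tau}=1$ when $\tau=t-\langle u,v_0\rangle$ and $c_{v_0,\tau}=0$ otherwise. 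Hence $\epsilon_{u,t}=\sum_{v_0\in{\mathbb F}_{q^m}}e_{v_0,\,t-\langle u,v_0\rangle}$, and reindexing this sum over the pairs $(v,\tau)$ satisfying $\tau=t-\langle u,v\rangle$, i.e. over $E_{-u,t}$, yields exactly $\epsilon_{u,t}=\sum_{(v,\tau)\in E_{-u,t}}e_{v,\tau}$.

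There is no genuine obstacle here: the statement is a bookkeeping identity and the argument is a direct computation from the definitions. The only points worth spelling out are that the exponent arithmetic $Z^tZ^{-\langle u,v\rangle}=Z^{t-\langle u,v\rangle}$ takes place in the additive group ${\mathbb F}_q$, and that the reindexing in the last step is legitimate because the map $v\mapsto(v,\,t-\langle u,v\rangle)$ is a bijection from ${\mathbb F}_{q^m}$ onto $E_{-u,t}$, so each basis vector $e_{v,\tau}$ with $(v,\tau)\in E_{-u,t}$ occurs exactly once and with coefficient $1$. It is also immediate from its defining equation that $E_{-u,t}$ is an affine hyperplane of ${\mathbb F}_{q^m}\times{\mathbb F}_q$, which justifies the terminology used in the statement.
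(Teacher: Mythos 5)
Your argument is correct and is exactly the computation the paper itself relies on: the lemma is stated with the collapse of the sum $\sum_{w}e_{u,t}(w)Z^{-\langle w,v\rangle}=Z^{t-\langle u,v\rangle}$ built into its statement and no separate proof is given, and your second paragraph just makes explicit the (routine) coordinate comparison yielding $\epsilon_{u,t}=\sum_{(v,\tau)\in E_{-u,t}}e_{v,\tau}$. Nothing is missing; your write-up is a correct, slightly more detailed version of the paper's own reasoning.
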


\begin{lemma}\label{gamma}
Let $\gamma_a \in {\mathcal G}_{(q,m)}$ be defined by
$\gamma_a(u)=Z^{\langle a,u \rangle}$, then the transform of
$\gamma_a$ is given by
$$
T_{(q,m)}(\gamma_a)(v)= \begin{cases}
q^mZ^0 \quad&\text{if }\, v=a \\
q^{m-1}\sum_{t \in {\mathbb F}_q}Z^t \quad&\text{if }\, v\neq a.
\end{cases}
$$
\end{lemma}

\begin{proof}
We have successively
\begin{align*}
T_{(q,m)}(\gamma_a)(v)&=\sum_{u \in {\mathbb F}_q^m}\gamma_a(u)Z^{-\langle
u,v \rangle}\\
&=\sum_{u \in {\mathbb F}_q^m}Z^{\langle a,u \rangle}Z^{-\langle u,v
\rangle}\\
&=\sum_{u \in {\mathbb F}_q^m}Z^{\langle a-v,u \rangle}.
\end{align*}

If $v=a$ we have that $T_{(q,m)}(\gamma_a)(v)=q^mZ^0$ and then, when $v\neq a$, for
each $t \in {\mathbb F}_q$, the equation $\langle a-v,u \rangle=t$ defines
a hyperplane and consequently has $q^{m-1}$ solutions.
\end{proof}

\medskip

Let  $\phi$ be an element of ${{\mathcal G}}_{(q,m)}$, we denote by $\psi =
T_{(q,m)}(\phi)$ its transform, and by $\theta=T_{(q,m)}(\psi)$
its double transform.

\begin{theorem}\label{double}
With the previous notations we have
$$\theta (w)= q^{m-1}\sum_{t \in {\mathbb F}_q}(Z^0-Z^t)  \phi(-w)
+\left( q^{m-1}\sum_{t \in {\mathbb F}_q}Z^t\right) \psi (0).$$
\end{theorem}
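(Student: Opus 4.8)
The plan is to compute the double transform directly from the definition and reduce everything to the two lemmas already proved. Writing $\phi$ in the basis $(e_{u,t})$ as $\phi=\sum_{(u,t)}\alpha_{u,t}\,e_{u,t}$ and using linearity of $T_{(q,m)}$, it suffices by Lemma \ref{evt} to understand $T_{(q,m)}(\epsilon_{u,t})$, i.e. the double transform of each basis element $e_{u,t}$. From $\epsilon_{u,t}(v)=Z^{t-\langle u,v\rangle}=Z^t\gamma_{-u}(v)$ and the ${\mathbb C}$-linearity of $T_{(q,m)}$ (the scalar $Z^t$ being a fixed element of the commutative algebra ${\mathbb C}{\mathbb F}_q$, which pulls through the sum defining the operator), we get $T_{(q,m)}(\epsilon_{u,t})=Z^t\,T_{(q,m)}(\gamma_{-u})$. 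Lemma \ref{gamma} then gives $T_{(q,m)}(\epsilon_{u,t})(w)=q^m Z^t$ when $w=-u$ and $q^{m-1}Z^t\sum_{s}Z^s$ otherwise.

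Next I would assemble these pieces. Summing over the basis, $\theta(w)=\sum_{(u,t)}\alpha_{u,t}\,T_{(q,m)}(\epsilon_{u,t})(w)$. Split the sum according to whether $u=-w$ or $u\neq -w$: the $u=-w$ terms contribute $q^m\sum_{t}\alpha_{-w,t}Z^t=q^m\phi(-w)$, while the remaining terms contribute $\bigl(q^{m-1}\sum_{s}Z^s\bigr)\sum_{u\neq -w}\sum_t\alpha_{u,t}Z^t$. The point now is to recognize the two remaining quantities. For the second, $\sum_{u}\sum_t\alpha_{u,t}Z^t=\sum_u\phi(u)$, and by definition $\psi(0)=T_{(q,m)}(\phi)(0)=\sum_u\phi(u)Z^{-\langle u,0\rangle}=\sum_u\phi(u)$; so $\sum_{u\neq -w}\sum_t\alpha_{u,t}Z^t=\psi(0)-\phi(-w)$. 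Substituting,
\begin{align*}
\theta(w)&=q^m\phi(-w)+\Bigl(q^{m-1}\sum_{s\in{\mathbb F}_q}Z^s\Bigr)\bigl(\psi(0)-\phi(-w)\bigr)\\
&=\Bigl(q^m Z^0-q^{m-1}\sum_{s\in{\mathbb F}_q}Z^s\Bigr)\phi(-w)+\Bigl(q^{m-1}\sum_{s\in{\mathbb F}_q}Z^s\Bigr)\psi(0),
\end{align*}
and writing $q^m Z^0-q^{m-1}\sum_s Z^s=q^{m-1}\sum_s(Z^0-Z^s)$ yields exactly the claimed formula.

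Alternatively, and perhaps more cleanly, one can avoid coordinates entirely: observe that $\psi=T_{(q,m)}(\phi)$ satisfies $\psi(v)=\sum_u\phi(u)Z^{-\langle u,v\rangle}$, so $\theta(w)=\sum_v\psi(v)Z^{-\langle v,w\rangle}=\sum_v\sum_u\phi(u)Z^{-\langle u,v\rangle-\langle v,w\rangle}=\sum_u\phi(u)\sum_v Z^{-\langle u+w,v\rangle}$. The inner sum $\sum_v Z^{-\langle u+w,v\rangle}$ is $T_{(q,m)}(\gamma_{0})$-type: it equals $q^m Z^0$ if $u+w=0$ and $q^{m-1}\sum_s Z^s$ if $u+w\neq 0$, by the same hyperplane-counting argument as in Lemma \ref{gamma}. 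Separating the term $u=-w$ from the rest then gives the result after the same rearrangement as above. I would likely present this second route, citing Lemma \ref{gamma} for the evaluation of the inner sum.

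The only real subtlety — and the step I would be most careful about — is the manipulation of scalars lying in the algebra ${\mathbb C}{\mathbb F}_q$ rather than in ${\mathbb C}$: one must check that $\sum_s Z^s$ genuinely factors out of the sum over $u$ in the $u\neq -w$ part (it does, since it is a fixed element and multiplication in ${\mathbb C}{\mathbb F}_q$ distributes over the formal sums), and that no hidden use of invertibility of $\sum_s Z^s$ is made (there is none — $\sum_s Z^s$ is a zero divisor in ${\mathbb C}{\mathbb F}_q$, but we never divide by it). Everything else is bookkeeping: a change in the order of the double summation over $u$ and $v$, and the elementary fact that a nontrivial linear form $\langle a,\cdot\rangle$ on ${\mathbb F}_q^m$ takes each value exactly $q^{m-1}$ times.
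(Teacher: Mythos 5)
Your proposal is correct, and the second route you say you would present is precisely the paper's own proof: expand the double sum, interchange the order of summation, evaluate the inner sum $\sum_v Z^{-\langle u+w,v\rangle}$ via Lemma \ref{gamma}, and use $\sum_{u\neq -w}\phi(u)=\psi(0)-\phi(-w)$. The basis-decomposition version you sketch first is just the same computation in coordinates, so there is nothing substantively different to compare.
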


\begin{proof}
We have that
\begin{align*}
\theta (w)&=\sum_{v \in {\mathbb F}_q^m}\left ( \sum_{u \in {\mathbb F}_q^m} \phi(u)
Z^{-\langle u,v \rangle}\right) Z^{-\langle v,w \rangle}\\
&=\sum_{v \in {\mathbb F}_q^m} \sum_{u \in {\mathbb F}_q^m} \phi(u) Z^{-\langle
u+w,v \rangle}\\
&=\sum_{u \in {\mathbb F}_q^m}\phi(u)  \sum_{v \in {\mathbb F}_q^m}
 Z^{-\langle u+w,v \rangle}
 \end{align*}

From Lemma \ref{gamma} we obtain
$$\theta (w)= q^m \phi(-w) + \left ( q^{m-1} \sum_{t \in {\mathbb F}_q}Z^t \right)
\sum_{u \in {\mathbb F}_q^m \setminus \{ -w\}} \phi(u).$$

The Lemma follows from the equality above and from the fact that:
$$\sum_{u \in {\mathbb F}_q^m \setminus \{ -w\}} \phi(u)=
\sum_{u \in {\mathbb F}_q^m } \phi(u) -\phi(-w)= \psi(0) -\phi(-w).$$
\end{proof}

We want to characterize the kernel of $T_{(q,m)}$, in order to do
that, we need the following lemma:

\begin{lemma}\label{pro}
A function $\phi \in {{\mathcal G}}_{(q,m)}$ verifies
$$\phi (w)\cdot \left (q-\sum_{t \in {\mathbb F}_q}Z^t\right )=0$$ for each $w \in
{\mathbb F}_q^m$ if, and only if,
$$\phi (w) = \lambda (w) \sum_{t \in {\mathbb F}_q}Z^t,$$
where $\lambda$ is a function from ${\mathbb F}_q^m$ into ${\mathbb C}$.
\end{lemma}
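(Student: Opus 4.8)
The plan is to work one point $w$ at a time and reduce everything to a short computation in the group algebra ${\mathbb C}{\mathbb F}_q$. Write $\sigma=\sum_{t\in{\mathbb F}_q}Z^t$ for the distinguished element appearing in the statement. The one fact that drives the whole argument is that multiplication by $Z^s$ permutes the basis $(Z^t)_{t\in{\mathbb F}_q}$ via the translation $t\mapsto t+s$ of ${\mathbb F}_q$, and therefore $Z^s\sigma=\sigma$ for every $s\in{\mathbb F}_q$. Consequently, for an arbitrary element $x=\sum_{t\in{\mathbb F}_q}\alpha_tZ^t$ of ${\mathbb C}{\mathbb F}_q$ one has $x\sigma=\bigl(\sum_{t\in{\mathbb F}_q}\alpha_t\bigr)\sigma$; in particular $\sigma^2=q\sigma$.

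For the ``if'' direction, suppose $\phi(w)=\lambda(w)\sigma$ for some scalar function $\lambda$. Then, using $\sigma^2=q\sigma$,
$$\phi(w)\cdot\Bigl(q-\sum_{t\in{\mathbb F}_q}Z^t\Bigr)=\lambda(w)\bigl(q\sigma-\sigma^2\bigr)=0$$
for every $w\in{\mathbb F}_q^m$, as required.

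For the ``only if'' direction, fix $w\in{\mathbb F}_q^m$ and write $\phi(w)=\sum_{t\in{\mathbb F}_q}\alpha_tZ^t$, where the coefficients $\alpha_t$ depend on $w$. The hypothesis $\phi(w)\cdot(q-\sigma)=0$ reads $q\,\phi(w)=\phi(w)\,\sigma$, and by the computation above the right-hand side equals $\bigl(\sum_{s\in{\mathbb F}_q}\alpha_s\bigr)\sigma$. Comparing the coefficient of $Z^t$ on both sides, for each $t\in{\mathbb F}_q$, gives
$$q\,\alpha_t=\sum_{s\in{\mathbb F}_q}\alpha_s.$$
The right-hand side is independent of $t$, so all the $\alpha_t$ coincide; calling this common value $\lambda(w)$ and letting $w$ range over ${\mathbb F}_q^m$ defines a function $\lambda:{\mathbb F}_q^m\to{\mathbb C}$ with $\phi(w)=\lambda(w)\sum_{t\in{\mathbb F}_q}Z^t$.

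I do not expect a genuine obstacle here: the content of the lemma is just that $\sigma/q$ is an idempotent of ${\mathbb C}{\mathbb F}_q$ whose range is the line ${\mathbb C}\sigma$, so the annihilator condition $x(q-\sigma)=0$ forces $x$ into that line. The only point requiring a little care is to keep the dependence on $w$ cleanly separated from the group-algebra arithmetic, which is why I would argue pointwise in $w$ throughout.
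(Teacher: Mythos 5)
Your proof is correct and follows essentially the same route as the paper: both arguments work pointwise in $w$ and rest on the identity $x\cdot\sum_{t\in{\mathbb F}_q}Z^t=\bigl(\sum_t\alpha_t\bigr)\sum_{t\in{\mathbb F}_q}Z^t$, from which the annihilation condition forces all coefficients of $\phi(w)$ to be equal. Your explicit remark that $\sigma/q$ is an idempotent is a nice conceptual gloss, but the computation is the same as the paper's.
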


\begin{proof}
Let $\phi$ be given by $\phi (w)=\sum_{t \in
{\mathbb F}_q}C_t(\phi)(w)Z^t,$ then we have

$$\phi (w)\cdot \left (q-\sum_{t \in {\mathbb F}_q}Z^t\right )=
q\phi(w)-\left ( \sum_{t \in {\mathbb F}_q}C_t(\phi)(w)\right ) \left (
\sum_{t \in {\mathbb F}_q}Z^t\right ).$$

If this product is equal to zero, then
$$\phi(w)=(1/q)\left ( \sum_{t \in {\mathbb F}_q}C_t(\phi)(w)\right )
\left ( \sum_{t \in {\mathbb F}_q}Z^t\right ).$$ On the other hand a direct
computation of
$$\left (\lambda(w)\sum_{t \in {\mathbb F}_q}Z^t\right ).
\left (q-\sum_{t \in {\mathbb F}_q}Z^t\right )$$ shows the converse.
\end{proof}

\medskip

Now we can determine the kernel of $T_{(q,m)}$.

\begin{theorem}\label{kernel}
The kernel of $T_{q,m}$ is the subspace of the functions $\phi$
such that for each $w \in {\mathbb F}_q^m$
$$\phi(w)=\lambda (w) \sum_{t\in {\mathbb F}_q}Z^t$$
where $\lambda$ is any function from ${\mathbb F}_q^m$ into ${\mathbb C}$ verifying
$$\sum_{u \in {\mathbb F}_q^m} \lambda (u)=0.$$
The dimension of the kernel is $q^m-1$.
\end{theorem}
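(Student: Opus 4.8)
The plan is to establish the two inclusions describing $\ker T_{(q,m)}$ and then read off its dimension. The only tools I would need are the double-transform formula of Theorem~\ref{double}, the characterization of Lemma~\ref{pro}, and one direct computation in the group algebra ${\mathbb C}{\mathbb F}_q$ based on the absorbing identity $\bigl(\sum_{t\in{\mathbb F}_q}Z^t\bigr)Z^s=\sum_{t\in{\mathbb F}_q}Z^t$ (obtained by reindexing $t\mapsto t+s$).

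First I would check the inclusion "$\supseteq$": if $\phi(w)=\lambda(w)\sum_{t\in{\mathbb F}_q}Z^t$ with $\lambda:{\mathbb F}_q^m\to{\mathbb C}$ and $\sum_{u\in{\mathbb F}_q^m}\lambda(u)=0$, then by the definition of $T_{(q,m)}$ and the absorbing identity,
$$T_{(q,m)}(\phi)(v)=\sum_{u\in{\mathbb F}_q^m}\lambda(u)\Bigl(\sum_{t\in{\mathbb F}_q}Z^t\Bigr)Z^{-\langle u,v\rangle}=\Bigl(\sum_{u\in{\mathbb F}_q^m}\lambda(u)\Bigr)\sum_{t\in{\mathbb F}_q}Z^t=0,$$
so $\phi\in\ker T_{(q,m)}$.

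For the reverse inclusion "$\subseteq$", I would take $\phi\in\ker T_{(q,m)}$, so that $\psi=T_{(q,m)}(\phi)=0$; in particular $\psi(0)=0$ and $\theta=T_{(q,m)}(\psi)=0$. Substituting into Theorem~\ref{double} gives, for every $w\in{\mathbb F}_q^m$,
$$0=q^{m-1}\sum_{t\in{\mathbb F}_q}(Z^0-Z^t)\,\phi(-w)=q^{m-1}\Bigl(q-\sum_{t\in{\mathbb F}_q}Z^t\Bigr)\phi(-w),$$
where I identify the scalar $q$ with $qZ^0=\sum_{t\in{\mathbb F}_q}Z^0$. Since $q^{m-1}\neq 0$ in ${\mathbb C}$ and $-w$ runs over all of ${\mathbb F}_q^m$, this says $\bigl(q-\sum_{t\in{\mathbb F}_q}Z^t\bigr)\phi(w)=0$ for every $w$, so Lemma~\ref{pro} yields $\phi(w)=\lambda(w)\sum_{t\in{\mathbb F}_q}Z^t$ for some $\lambda:{\mathbb F}_q^m\to{\mathbb C}$. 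Feeding this shape back into the computation of the previous paragraph, $0=\psi(v)=\bigl(\sum_{u}\lambda(u)\bigr)\sum_{t}Z^t$, and since $\sum_{t}Z^t\neq 0$ this forces $\sum_{u\in{\mathbb F}_q^m}\lambda(u)=0$, as required.

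Finally, for the dimension, I would note that the map sending $\lambda:{\mathbb F}_q^m\to{\mathbb C}$ to the function $w\mapsto\lambda(w)\sum_{t}Z^t$ is ${\mathbb C}$-linear and injective (if the image is $0$ then $\lambda(w)\sum_tZ^t=0$ forces $\lambda(w)=0$ for each $w$). By the two inclusions just proved, $\ker T_{(q,m)}$ is exactly the image under this map of the hyperplane $\{\lambda:\sum_{u}\lambda(u)=0\}$ inside the $q^m$-dimensional space of functions ${\mathbb F}_q^m\to{\mathbb C}$, hence $\dim\ker T_{(q,m)}=q^m-1$. I do not foresee a real obstacle here; the only points requiring care are the bookkeeping in ${\mathbb C}{\mathbb F}_q$ (the absorbing property of $\sum_tZ^t$ and the identification of $q$ with $qZ^0$) and remembering to use $\psi(0)=0$, which is precisely what upgrades the shape furnished by Lemma~\ref{pro} to the extra linear condition $\sum_u\lambda(u)=0$.
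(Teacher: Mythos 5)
Your proposal is correct and follows essentially the same route as the paper: both directions rest on Theorem~\ref{double} (with $\psi(0)=0$) to obtain $\bigl(q-\sum_{t\in{\mathbb F}_q}Z^t\bigr)\phi(w)=0$, then Lemma~\ref{pro} for the shape of $\phi$, and a direct computation of $\psi$ (your absorbing-identity calculation is just the coefficient identity $C_t(\psi)(w)=\sum_u\lambda(u)$ that the paper writes out) to extract the condition $\sum_u\lambda(u)=0$ and the dimension count. Your writeup is in fact slightly more explicit than the paper's about the reverse inclusion and the injectivity of $\lambda\mapsto\lambda(\cdot)\sum_tZ^t$, but the mathematics is the same.
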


\begin{proof}
Note that if the transform of $\phi$ is the zero function, then
using the Proposition \ref{double} we get
$$\phi (w)\cdot \left (q-\sum_{t \in {\mathbb F}_q}Z^t\right )=0,$$
and by Lemma \ref{pro}
$$\phi(w)=\lambda (w) \sum_{t\in {\mathbb F}_q}Z^t.$$
Hence, for each $t\in {\mathbb F}_q$ we must have
$$C_t(\phi)(w)=\lambda(w).$$
If we denote by $\psi$ the transform of $\phi$ we know that

\begin{align*}
C_t(\psi)(w)&=\sum_{u \in {\mathbb F}_q^m}C_{\langle u,w
\rangle+t}(\phi)(u)\\
&=\sum_{u \in {\mathbb F}_q^m}\lambda (u)
\end{align*}

The result follows. Let us remark that the functions $\lambda$
such that
$$\sum_{u \in {\mathbb F}_q^m} \lambda (u)=0,$$
defines an hyperplane of the space of functions from ${\mathbb F}_q^m$ into
${\mathbb C}$ and then, the dimension of the kernel is $q^m-1$.
\end{proof}

\begin{proposition}\label{kern}
The functions 
$$\delta_a = \sum_{t \in {\mathbb F}_q} (e_{0,t}-e_{a,t})$$
with $a \in {\mathbb F}_q^m \setminus \{ 0 \}$ are a basis of the kernel
${\rm Ker}(T_{(q,m)})$, where
$$e_{u,t}(v)=\begin{cases} Z^t &\quad\text{ if $v=u$}\\
0 &\quad\text{otherwise}
\end{cases}
$$

The functions $e_{a,t}$ with 
$$(a \neq 0 \hbox{ and } t \neq 0) \hbox{ or  } (a=0) $$
is a basis of a complement of ${\rm Ker}(T_{q,m})$.
\end{proposition}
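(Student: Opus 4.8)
The plan is to prove the two claims separately, using the explicit description of $\operatorname{Ker}(T_{(q,m)})$ obtained in Theorem~\ref{kernel}. First I would observe that each $\delta_a$ lies in the kernel. Indeed, evaluating $\delta_a$ at a point $w$ gives $\delta_a(w)=\sum_{t}(e_{0,t}(w)-e_{a,t}(w))$, which equals $\sum_t Z^t$ if $w=0$, equals $-\sum_t Z^t$ if $w=a$, and is $0$ otherwise. Thus $\delta_a(w)=\lambda_a(w)\sum_t Z^t$ with $\lambda_a$ the function taking value $1$ at $0$, $-1$ at $a$, and $0$ elsewhere; since $\sum_{u}\lambda_a(u)=0$, Theorem~\ref{kernel} shows $\delta_a\in\operatorname{Ker}(T_{(q,m)})$.

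Next I would show the family $(\delta_a)_{a\neq 0}$ is linearly independent. Suppose $\sum_{a\neq 0}\mu_a\delta_a=0$. Evaluating at a fixed $b\neq 0$, the only terms contributing $e_{b,t}$ come from $\delta_b$, so the coefficient of $Z^t$ in the $b$-component is $-\mu_b$ for every $t$; hence $\mu_b=0$. This holds for all $b\neq 0$, giving independence. Since there are exactly $q^m-1$ functions $\delta_a$ and, by Theorem~\ref{kernel}, $\dim\operatorname{Ker}(T_{(q,m)})=q^m-1$, the family $(\delta_a)_{a\neq 0}$ is a basis of the kernel.

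For the second claim, let $S=\{(a,t): (a\neq 0 \text{ and } t\neq 0)\text{ or }a=0\}$ and consider the family $(e_{a,t})_{(a,t)\in S}$. A quick count gives $|S| = q + (q^m-1)(q-1) = q^{m+1}-q^m+1 = q^m(q-1)+1$. Wait—let me recount: the full basis $(e_{u,t})$ has $q^{m+1}$ elements, the excluded pairs are those with $a\neq 0$ and $t=0$, of which there are $q^m-1$; hence $|S| = q^{m+1}-(q^m-1)$, which is exactly $q^{m+1}-\dim\operatorname{Ker}(T_{(q,m)})$, the dimension of a complement. So it suffices to prove that the span $V$ of $(e_{a,t})_{(a,t)\in S}$ satisfies $V\cap\operatorname{Ker}(T_{(q,m)})=\{0\}$; the dimension count then forces $\mathcal{G}_{(q,m)}=V\oplus\operatorname{Ker}(T_{(q,m)})$. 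To see the intersection is trivial, take $\phi=\sum_{(a,t)\in S}c_{a,t}e_{a,t}$ in the kernel. By Theorem~\ref{kernel}, $\phi(w)=\lambda(w)\sum_{t}Z^t$ for every $w$, meaning the coefficients $C_t(\phi)(w)$ are independent of $t$. For $w=a\neq 0$, the coefficient of $Z^t$ in $\phi(a)$ is $c_{a,t}$ when $t\neq 0$ and $0$ when $t=0$ (since $(a,0)\notin S$); equality across all $t$ forces $c_{a,t}=0$ for all $t\neq 0$. For $w=0$, the coefficients $c_{0,t}$ must all be equal to some common value $\lambda(0)$, and the kernel condition $\sum_u\lambda(u)=0$ together with $\lambda(w)=0$ for $w\neq 0$ gives $\lambda(0)=0$, so $c_{0,t}=0$ for all $t$. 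Hence $\phi=0$, completing the proof.

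The main obstacle is bookkeeping rather than conceptual: one must be careful that in the complement statement the relevant quantity is $C_t(\phi)(w)$ being $t$-independent (the content of Lemma~\ref{pro} and Theorem~\ref{kernel}), and then read off exactly which $e_{a,t}$-coefficients this kills. The dimension count $q^{m+1}-(q^m-1)$ matching $|S|$ is what lets us upgrade "trivial intersection" to "direct-sum complement'' without separately verifying that $V+\operatorname{Ker}(T_{(q,m)})$ spans everything.
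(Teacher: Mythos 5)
Your proposal is correct and follows essentially the same route as the paper: membership of each $\delta_a$ in the kernel via the characterization of Theorem~\ref{kernel}, linear independence plus the dimension count $q^m-1$ for the first claim, and trivial intersection of $\mathrm{span}(e_{a,t})_{(a,t)\in S}$ with the kernel plus the count $q^{m+1}-(q^m-1)$ for the second. If anything, your version of the complement argument is stated more cleanly, since you read the condition ``$C_t(\phi)(w)$ independent of $t$'' directly off $\phi(w)$ rather than off its transform.
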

\begin{proof}
For any $a \in {\mathbb F}_q^m \setminus \{ 0 \}$ the following holds:
$$\delta_a(v)=\lambda(v) \sum_{t \in {\mathbb F}_q}Z^t,$$
with $\lambda(0)=1$, $\lambda(a)=-1$ and $\lambda(v)=0$ for the other values of $v$,
then by Theorem \ref{kernel} $\delta_a$ is in the kernel of $T_{(q,m)}$.
As the $e_{a,t}$ are linearly independent, the $\delta_a$ are linearly independent.
We conclude that the $\delta_a$ constitute a basis of ${\rm Ker}(T_{(q,m)})$.

Let $I$ be the set
$$I=\{(v,t)~|~ (v \neq 0 \hbox{ and } t \neq 0) \hbox{ or } (v=0)\}$$ and
$\phi$ the function
$$\phi= \sum_{(v,t)\in I} \lambda_{v,t} e_{v,t}.$$ 
The following holds:
$$T_{(q,m)}(\phi)(v)=\sum_{\{t|(v,t)\in I\}}\lambda_{v,t} Z^t.$$
If $v\neq 0$ then
$$T_{(q,m)}(\phi)(v)=\sum_{t\in{\mathbb F}_q^{*}}\lambda_{v,t} Z^t.$$
Then $T_{(q,m)}(\phi)(v)$ cannot be a multiple of $\sum_{t\in {\mathbb F}_q}Z^t$ unless
all the $\lambda_{v,t}$ are zero for $v\neq 0$ and in this case the
coefficient of $\sum_{t\in {\mathbb F}_q}Z^t$ is $0$. Now if $v=0$
then
$$T_{(q,m)}(\phi)(0)=\sum_{t\in{\mathbb F}_q} \lambda_{0,t} Z^t.$$
Then $T_{(q,m)}(\phi)(0)$ cannot be a multiple of $\sum_{t\in {\mathbb F}_q}Z^t$ unless
all the $\lambda_{0,t}$ have the same value $\lambda_0$ and in this case the coefficient of
$\sum_{t\in {\mathbb F}_q}Z^t$ is $\lambda_0$. Hence, if $T_{(q,m)}(\phi)(v)$ can be written
$\lambda(v)\sum_{t\in {\mathbb F}_q}Z^t$, we have
$\sum_{v\in{\mathbb F}_q^m}\lambda(v)=\lambda_0$. Then, if $\phi \in {\rm Ker}(T_{(q,m)})$,
for any $(v,t)\in I$ we have $\lambda_{v,t}=0$.
We conclude that the $q^{m+1}-(q^m -1)$ linearly independent vectors  $(e_{v,t})_{(v,t)\in I}$
constitute a basis of a complement of ${\rm Ker}(T_{(q,m)})$.
\end{proof}

\begin{corollary}\label{corokern}
The vectors $\epsilon_{v,t}=T_{(q,m)}(e_{v,t})$ with $(v,t)\in I$ are linearly independent.
They constitute a basis of the image $T_{(q,m)}({\mathcal G}_{(q,m)})$.
\end{corollary}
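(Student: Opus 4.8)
The plan is to deduce this immediately from Proposition \ref{kern} together with the elementary fact that a linear map, restricted to a complement of its kernel, is an isomorphism onto its image.

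First I would recall from Proposition \ref{kern} that the family $(e_{v,t})_{(v,t)\in I}$ is a basis of a subspace $C\subset {\mathcal G}_{(q,m)}$ with ${\mathcal G}_{(q,m)} = C \oplus {\rm Ker}(T_{(q,m)})$. Since every element of ${\mathcal G}_{(q,m)}$ differs from an element of $C$ by something in ${\rm Ker}(T_{(q,m)})$, we have $T_{(q,m)}({\mathcal G}_{(q,m)}) = T_{(q,m)}(C)$; and since $C\cap {\rm Ker}(T_{(q,m)}) = \{0\}$, the restriction $T_{(q,m)}|_{C}$ is injective. Hence $T_{(q,m)}|_{C}\colon C \to T_{(q,m)}({\mathcal G}_{(q,m)})$ is a linear isomorphism, and it therefore carries the basis $(e_{v,t})_{(v,t)\in I}$ of $C$ onto a basis $(\epsilon_{v,t})_{(v,t)\in I}$ of $T_{(q,m)}({\mathcal G}_{(q,m)})$. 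In particular the $\epsilon_{v,t}$, $(v,t)\in I$, are linearly independent and span the image, which is exactly the assertion. The explicit description of these basis vectors is already supplied by Lemma \ref{evt}, namely $\epsilon_{v,t} = \sum_{(w,\tau)\in E_{-v,t}} e_{w,\tau}$.

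As a consistency check I would verify the dimension count: $|I| = q^{m+1} - (q^m-1)$, which agrees with $\dim {\mathcal G}_{(q,m)} - \dim {\rm Ker}(T_{(q,m)}) = q^{m+1} - (q^m-1)$ from Theorem \ref{kernel} via the rank--nullity theorem, so $\dim T_{(q,m)}({\mathcal G}_{(q,m)}) = q^{m+1}-q^m+1$ as expected. There is essentially no serious obstacle here; the only point requiring care is to invoke the direct-sum decomposition established in Proposition \ref{kern} rather than attempting to re-prove linear independence of the $\epsilon_{v,t}$ directly. A direct verification would amount to re-running the argument in the proof of Proposition \ref{kern} about which linear combinations $\sum_{(v,t)\in I}\lambda_{v,t}\,\epsilon_{v,t}$ can be a multiple of $\sum_{t\in{\mathbb F}_q}Z^t$ (and hence, being in the image, actually vanish), and is strictly more work than the abstract argument above.
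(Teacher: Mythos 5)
Your argument is correct and is exactly the reasoning the paper intends: the corollary is stated without proof as an immediate consequence of Proposition \ref{kern}, via the standard fact that a linear map restricted to a complement of its kernel is an isomorphism onto the image. Your dimension check ($|I| = q^{m+1}-(q^m-1)$ matching rank--nullity with Theorem \ref{kernel}) is a sensible extra confirmation but not needed.
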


\section{Application to the Hamming distances from a function to all codewords
of a Generalized Reed-Muller code of order $1$}\label{sec:distances}
\subsection{System of equations satisfied by the distances of a function to all codewords}\label{secsys}
Coming back to our problem, if $f \in {{\mathcal F}}_{(q,m)}$ let us
associate to it the function $F \in {{\mathcal G}}_{(q,m)}$ defined by

$$F(u)=Z^{f(u)}.$$

The transform $T_{(q,m)}(F)$ is given by

\begin{equation}\label{nvt}
\begin{array}{ll}
T_{(q,m)}(F) (v) &= \sum _{u \in {\mathbb F}_q^m} Z^{f(u)-\langle u,v \rangle}\\
&= \sum _{t \in {\mathbb F}_q} N_{v,t}(f)Z^t,
\end{array}
\end{equation}
where $N_{v,t}(f)= \sharp\{ u\in {\mathbb F}_q^m ~|~ f(u)-\langle u,v
\rangle=t \}$ as defined before.

\begin{lemma}
 Pour any $v \in {\mathbb F}_q^m$the following formula holds:
$$\sum_{t\in {\mathbb F}_q} N_{v,t}(f)=q^m.$$
\end{lemma}

\begin{proof}
It is a direct consequence of the equalities (\ref{nvt}). Indeed
the total sum of coefficients in the first expression is $q^m$
and in the second one it is $\sum_t N_{v,t}$.
\end{proof}

As one can see, the numbers $N_{v,t}(f)$ are exactly the
coefficients of $T_{(q,m)}(F)$ where $F$ is associated to $f$ as
above.

For each $w \in {\mathbb F}_{q^m}$ we consider the linear form $L_w$
defined over ${\mathbb F}_{q^m}\times {\mathbb F}_q$ by
$$L_w(v,t)=-\langle w,v\rangle +t,$$
and for each $w \in {\mathbb F}_{q^m}$ and each $\tau \in {\mathbb F}_q$ we consider
the hyperplane $E_{w,\tau}$ of ${\mathbb F}_{q^m}\times {\mathbb F}_q$ defined by
$$E_{w,\tau}=\{ (v,t)\in {\mathbb F}_{q^m}\times {\mathbb F}_q ~|~ L_w(v,t)=\tau\}.$$

\begin{theorem}\label{sys}
Let $f \in {{\mathcal F}}_{(q,m)}$, then $N_{v,t}(f)$ are solutions of the
following linear system with $q^{m+1}$ equations on $q^{m+1}$ variables
where the equation numbered $(w,\tau)$ is:
\begin{equation*}
(w,\tau) \quad \quad \sum_{(v,t) \in E_{w,\tau}}x_{v,t}= \left \{
\begin{array}{ccc}
q^{2m-1}-q^{m-1} &  \quad if  \quad f(-w)\neq \tau \\
q^{2m-1}-q^{m-1} + q^m &  \quad if  \quad f(-w)=\tau
\end{array}
\right .
\end{equation*}
\end{theorem}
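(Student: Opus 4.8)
The plan is to compute the double transform $\theta = T_{(q,m)}(T_{(q,m)}(F))$ in two different ways and match coefficients. On the one hand, Theorem~\ref{double} gives a closed formula for $\theta(w)$ in terms of $\phi = F$ and $\psi = T_{(q,m)}(F)$; specializing to $\phi(u) = Z^{f(u)}$ and using that $\psi(0) = \sum_{u} Z^{f(u)} = \sum_{t} N_{0,t}(f) Z^t$ is computed from (\ref{nvt}) at $v=0$, we get $\theta(w)$ as an explicit element of ${\mathbb C}{\mathbb F}_q$. On the other hand, applying $T_{(q,m)}$ directly to $\psi$ and expanding $\psi(v) = \sum_t N_{v,t}(f) Z^t$, we get
\begin{align*}
\theta(w) &= \sum_{v \in {\mathbb F}_q^m} \psi(v) Z^{-\langle v, w\rangle} = \sum_{v \in {\mathbb F}_q^m} \sum_{t \in {\mathbb F}_q} N_{v,t}(f) Z^{t - \langle v,w\rangle}.
\end{align*}
Grouping the terms $Z^{t-\langle v,w\rangle}$ by the value $\tau = t - \langle v,w\rangle = L_w(v,t)$, the coefficient of $Z^\tau$ in this second expression for $\theta(w)$ is exactly $\sum_{(v,t)\in E_{w,\tau}} N_{v,t}(f)$, which is the left-hand side of the system.

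Next I would extract the coefficient of $Z^\tau$ from the first (Theorem~\ref{double}) expression for $\theta(w)$. Writing $\psi(0) = \sum_{s} N_{0,s}(f) Z^s$ and expanding
$$\theta(w) = q^{m-1}\Bigl(\sum_{t} (Z^0 - Z^t)\Bigr) Z^{f(-w)} + q^{m-1}\Bigl(\sum_{t} Z^t\Bigr)\Bigl(\sum_s N_{0,s}(f) Z^s\Bigr),$$
the second summand contributes $q^{m-1}\sum_s N_{0,s}(f) = q^{m-1} q^m = q^{2m-1}$ to the coefficient of every $Z^\tau$ (using the Lemma that $\sum_s N_{0,s}(f) = q^m$). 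The first summand: $q^{m-1}(\sum_t Z^t) Z^{f(-w)}$ contributes $q^{m-1}$ to every $Z^\tau$, while $-q^{m-1}(\sum_t Z^{t + f(-w)}) = -q^{m-1}\sum_t Z^t$ contributes $-q^{m-1}$ to every $Z^\tau$; these two cancel. What survives from the first summand is the leftover $q^{m-1} Z^0 \cdot (\text{count of }t) $ — more precisely, rewriting $\sum_t(Z^0 - Z^t) = q Z^0 - \sum_t Z^t$, the first summand is $q^{m-1}(qZ^0 - \sum_t Z^t) Z^{f(-w)} = q^m Z^{f(-w)} - q^{m-1}\sum_t Z^t$. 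So the coefficient of $Z^\tau$ in $q^m Z^{f(-w)}$ is $q^m$ if $\tau = f(-w)$ and $0$ otherwise, and the coefficient of $Z^\tau$ in $-q^{m-1}\sum_t Z^t$ is $-q^{m-1}$. Adding the $q^{2m-1}$ from the second summand, the total coefficient of $Z^\tau$ is $q^{2m-1} - q^{m-1}$ when $\tau \neq f(-w)$, and $q^{2m-1} - q^{m-1} + q^m$ when $\tau = f(-w)$. Equating the two computed coefficients of $Z^\tau$ yields exactly the stated equation $(w,\tau)$.

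The steps in order: (1) set $\phi = F$ with $F(u) = Z^{f(u)}$ and record $\psi(v) = \sum_t N_{v,t}(f) Z^t$ from (\ref{nvt}); (2) compute $\theta(w)$ directly as a double sum and identify the coefficient of $Z^\tau$ with $\sum_{(v,t)\in E_{w,\tau}} x_{v,t}$ evaluated at $x_{v,t} = N_{v,t}(f)$; (3) substitute into Theorem~\ref{double}, using $\psi(0) = \sum_s N_{0,s}(f) Z^s$ and $\sum_s N_{0,s}(f) = q^m$; (4) extract the coefficient of $Z^\tau$ from that formula via the algebra of ${\mathbb C}{\mathbb F}_q$, taking care that multiplication by $\sum_t Z^t$ sends any $Z^a$ to $\sum_t Z^t$; (5) equate the two expressions. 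I expect the only real obstacle to be bookkeeping in step~(4): one must handle the $Z^{f(-w)}$ shift correctly and observe the cancellation of the two $\pm q^{m-1}\sum_t Z^t$ terms, so that the case split depending on whether $\tau = f(-w)$ emerges cleanly. Everything else is a routine rewriting of sums already set up by Theorem~\ref{double} and the Lemma on $\sum_t N_{v,t}(f)$.
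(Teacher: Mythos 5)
Your proposal is correct and follows essentially the same route as the paper: compute $T_{(q,m)}^2(F)$ once via Theorem~\ref{double} (using $\sum_t N_{0,t}(f)=q^m$ and the fact that multiplication by $\sum_t Z^t$ collapses any $Z^a$ to $\sum_t Z^t$) and once by expanding $\psi(v)=\sum_t N_{v,t}(f)Z^t$, then match the coefficients of $Z^\tau$. The only blemish is the false start claiming two $\pm q^{m-1}\sum_t Z^t$ terms ``cancel''; your subsequent ``more precisely'' rewriting $q^{m-1}(qZ^0-\sum_t Z^t)Z^{f(-w)}=q^mZ^{f(-w)}-q^{m-1}\sum_t Z^t$ is the correct bookkeeping and yields exactly the stated right-hand sides.
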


\begin{proof}
Computing $T_{(q,m)}^2(F)$, where $F=Z^{f}$ and by using the
result of Theorem \ref{double}, we obtain
$$T_{(q,m)}^2(F)(w)=q^{m-1}(q-\sum_{t\in {\mathbb F}_q}Z^t)F(-w)
+q^{m-1}\sum_{t\in{\mathbb F}_q}Z^t \sum_{u\in{\mathbb F}_{q^m}}F(u).$$
Denoting $K(Z)=\sum_{t\in {\mathbb F}_q}Z^t$ and observing that
$$K(Z) \sum_{t\in{\mathbb F}_q} \alpha_tZ^t=(\sum_{t\in{\mathbb F}_q}\alpha_t)K(Z),$$
we obtain
\begin{align*}
T_{(q,m)}^2(F)(w)&=q^mZ^{f(-w)}+K(Z)(q^{2m-1}-q^{m-1})\\
&=(q^{2m-1}-q^{m-1}+q^m)Z^{f(-w)}\\
&+ (q^{2m-1}-q^{m-1})\sum_{t \neq
f(-w)}Z^t.
\end{align*}

On the other hand, if we compute $T_{(q,m)}^2(F)$ using that
$$T_{(q,m)}(F)(v)=\sum_{t\in{\mathbb F}_{q^m}}N_{v,t}Z^t,$$
we obtain
$$T_{(q,m)}^2(F)(w)=\sum_{\tau \in{\mathbb F}_q}
\left(\sum_{(v,t)\in E_{w,\tau}}N_{v,t}\right) Z^{\tau}.$$
The theorem follows by comparing the two expressions obtained for
$T_{(q,m)}^2(F)$.
\end{proof}

\begin{remark}
The system presented in Theorem \ref{sys} has the following structure:
it is constituted by $q^m$ blocks ${\mathcal B}_w$ of $q$ equations. 
The block ${\mathcal B}_w$ contains the $q$ equations numbered $(w,\tau)$
where $w$ is fixed and $\tau$ takes the $q$ possible values in ${\mathbb F}_q$.
Each equation of a block involves $q^m$ variables, namely the variables indexed
by the points $(v,t)$ of the hyperplane $E_{w,\tau}$ of 
${\mathbb F}_{q^m}\times {\mathbb F}_q$.  The $q$ hyperplanes $E_{w,\tau}$
($w$ fixed, $\tau \in {\mathbb F}_q$) are parallel, then each variable $x_{v,t}$
is in one and only one equation of each block ${\mathcal B}_w$.
\end{remark}

Let us consider the basis defined in section \ref{sec:transform} by (\ref{base}). 
Remark that the matrix of the system (\ref{sys}) is the matrix ${\mathcal T}_{(q,m)}$
of $T_{(q,m)}$ with respect to the considered basis. Namely by construction
(see the proof of Theorem \ref{sys}), the system can be written
$${\mathcal T}_{(q,m)}X=B,$$
where $X$ is the column
$$X=\left(
\begin{array}{c}
\vdots\\
x_{v,t}\\
\vdots
\end{array}
\right),
$$
and $B$ the column 
$$B=\left(\begin{array}{c}
\vdots\\
b_{w,\tau}\\
\vdots
\end{array}
\right),
$$
where 
$$
b_{w,\tau}=\left \{
\begin{array}{ccc}
q^{2m-1}-q^{m-1} &  \quad if  \quad f(-w)\neq \tau \\
q^{2m-1}-q^{m-1} + q^m &  \quad if  \quad f(-w)=\tau
\end{array}
\right .
.$$
The system has a solution because we know that the values $N_{v,t}(f)$ constitute a solution.
But, as the linear map $T_{(q,m)}$ has a kernel, the system has not a unique solution.
However, if we add some ``normalization'' conditions we obtain the desired solution.

\begin{theorem}\label{ajout}
The numbers $N_{v,t}(f)$ are the unique solution of the system
that appears on the Theorem \ref{sys} if we join the following
$q^m$ equations
$$\sum_{t \in {\mathbb F}_q}x_{v,t}=q^m. \quad \forall v \in {\mathbb F}_{q^m}.$$
\end{theorem}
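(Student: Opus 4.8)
The plan is to show that adjoining the $q^m$ normalization equations $\sum_{t\in{\mathbb F}_q}x_{v,t}=q^m$ restricts the solution set of the system in Theorem~\ref{sys} to a single point, and that point is necessarily $(N_{v,t}(f))$. Since Theorem~\ref{sys} already guarantees that $(N_{v,t}(f))$ solves the original system, and the Lemma preceding Theorem~\ref{sys} shows $\sum_t N_{v,t}(f)=q^m$, the tuple $(N_{v,t}(f))$ satisfies the enlarged system as well; so existence is free, and only uniqueness needs argument.

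For uniqueness, I would argue at the level of the difference of two solutions. Write the original system as ${\mathcal T}_{(q,m)}X=B$. If $X$ and $X'$ both solve it, then $D=X-X'$ lies in $\ker T_{(q,m)}$, which by Theorem~\ref{kernel} (or Proposition~\ref{kern}) consists exactly of the functions $\phi$ with $\phi(v)=\lambda(v)\sum_{t\in{\mathbb F}_q}Z^t$ and $\sum_{v}\lambda(v)=0$; translated to coordinates, this says $D_{v,t}=\lambda(v)$ is independent of $t$, and $\sum_v\lambda(v)=0$. Now impose that both $X$ and $X'$ also satisfy the normalization equations: then for every $v$, $\sum_{t\in{\mathbb F}_q}D_{v,t}=q^m-q^m=0$, i.e. $q\,\lambda(v)=0$, hence $\lambda(v)=0$ for all $v$, so $D=0$. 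This shows the enlarged system has at most one solution, and combined with the existence of $(N_{v,t}(f))$ it has exactly that one.

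The one point that requires care — the main obstacle, though a mild one — is the correct identification of $\ker {\mathcal T}_{(q,m)}$ in \emph{coordinates} with the description of $\ker T_{(q,m)}$ given earlier, which is phrased in the group-algebra language: one must check that under the basis $(e_{v,t})$ the condition ``$\phi(v)$ is a scalar multiple of $\sum_t Z^t$'' is precisely ``the coordinates $D_{v,t}$ do not depend on $t$,'' and that the normalization constraint $\sum_t x_{v,t}=q^m$ being satisfied by two solutions forces $\sum_t D_{v,t}=0$ for each fixed $v$ separately (not merely globally), which is what kills the kernel. Once that bookkeeping is in place the argument is immediate. One may also phrase it dimensionally: the kernel has dimension $q^m-1$ by Theorem~\ref{kernel}, the $q^m$ normalization functionals $X\mapsto\sum_t x_{v,t}$ restricted to the kernel are the maps $\lambda\mapsto q\,\lambda(v)$, whose common zero locus inside $\{\lambda:\sum_v\lambda(v)=0\}$ is $\{0\}$; hence the enlarged system, having a solution, has a unique one.
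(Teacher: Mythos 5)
Your proposal is correct and follows essentially the same route as the paper: the paper also observes that any solution of the original system has the form $\left(\strut N_{v,t}+\lambda(v)\right)_{v,t}$ with $\sum_v\lambda(v)=0$ (using the kernel description from Theorem \ref{kernel}), and that the normalization equations force $q\,\lambda(v)=0$ for each $v$, hence $\lambda\equiv 0$. Your added care about translating the group-algebra description of the kernel into coordinates is a reasonable explicit check that the paper leaves implicit.
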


\begin{proof}
We know that any other solution is obtained from the previous
solution $\left (N_{v,t}\right)_{v,t}$ by adding an element in the
kernel of the transformation, that is any other solution has the
form $\left (\strut N_{v,t}+\lambda(v)\right)_{v,t}$ with
$\sum_v\lambda(v)=0$. For any $v$ fix, we have that $\sum_{t \in
{\mathbb F}_q}N_{v,t}=q^m$ and the result follows from it.
\end{proof}

\subsection{Transformation into a Cramer linear system}
\begin{theorem}
The system $(S)$ constructed in the following way:
\begin{enumerate}
 \item suppress from the system (\ref{sys}) the $q^m-1$ lines numbered  $(w,0)$ with $w\neq 0$,
 \item replace these equations by the $q^m-1$ equations $\sum_{t \in {\mathbb F}_q}x_{w,t}=q^m$, where $w \neq 0$,
\end{enumerate}
is a Cramer linear system and has $\left(\strut N_{v,t}(f)\right)_{v,t}$ for unique solution.
\end{theorem}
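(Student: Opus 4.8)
The plan is to establish two facts: (i) the vector $\big(N_{v,t}(f)\big)_{v,t}$ satisfies every equation of $(S)$, and (ii) the homogeneous system $(S_{0})$ obtained from $(S)$ by setting all right-hand sides to $0$ has only the zero solution. Since $(S)$ has $q^{m+1}-(q^{m}-1)+(q^{m}-1)=q^{m+1}$ equations in the $q^{m+1}$ unknowns $x_{v,t}$, fact (ii) says its matrix is invertible, so $(S)$ is a Cramer system, and then (i) identifies its unique solution as $\big(N_{v,t}(f)\big)_{v,t}$.

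Fact (i) is immediate: $\big(N_{v,t}(f)\big)_{v,t}$ solves the whole system of Theorem \ref{sys}, hence in particular the equations of that system that are kept in $(S)$; and by the Lemma preceding Theorem \ref{ajout} one has $\sum_{t\in{\mathbb F}_{q}}N_{w,t}(f)=q^{m}$ for every $w$, which is exactly the added equation indexed by $w\neq 0$.

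For fact (ii), let $(y_{v,t})_{v,t}$ solve $(S_{0})$. The crucial point, which I expect to be the only delicate step, is to show that such a $(y_{v,t})$ already satisfies the homogeneous form of the \emph{entire} system of Theorem \ref{sys}, i.e. lies in ${\rm Ker}(T_{(q,m)})$. The block $w=0$ is untouched by the construction of $(S)$: since $E_{0,\tau}=\{(v,t)\mid t=\tau\}$, its homogeneous equations read $\sum_{v}y_{v,\tau}=0$ for each $\tau\in{\mathbb F}_{q}$, and summing them gives $\Sigma:=\sum_{(v,t)}y_{v,t}=0$. Now fix $w\neq 0$. The $q$ hyperplanes $E_{w,\tau}$, $\tau\in{\mathbb F}_{q}$, are parallel and partition ${\mathbb F}_{q^{m}}\times{\mathbb F}_{q}$, so $\sum_{\tau\in{\mathbb F}_{q}}\Big(\sum_{(v,t)\in E_{w,\tau}}y_{v,t}\Big)=\Sigma=0$; the summands with $\tau\neq 0$ are the kept homogeneous equations $(w,\tau)$, hence vanish, so the summand with $\tau=0$, namely $\sum_{(v,t)\in E_{w,0}}y_{v,t}$, vanishes too. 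Thus the deleted equations $(w,0)$, $w\neq 0$, hold as well, so $(y_{v,t})$ satisfies the homogeneous version of the system of Theorem \ref{sys}; as that system's matrix is ${\mathcal T}_{(q,m)}$, this means $(y_{v,t})\in{\rm Ker}(T_{(q,m)})$. By Theorem \ref{kernel}, read in the basis $(e_{v,t})$, this forces $y_{v,t}=\lambda(v)$ to be independent of $t$, with $\sum_{v}\lambda(v)=0$. Finally, the added equations give, for each $w\neq 0$, $q\,\lambda(w)=\sum_{t\in{\mathbb F}_{q}}y_{w,t}=0$, hence $\lambda(w)=0$ for $w\neq 0$, and then $\lambda(0)=\sum_{v}\lambda(v)=0$. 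Therefore $(y_{v,t})=0$, which proves (ii) and completes the argument.
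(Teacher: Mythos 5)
Your proof is correct, but it takes a genuinely different route from the paper's at the decisive step. The paper proves invertibility of the matrix of $(S)$ by exploiting a symmetry of ${\mathcal T}_{(q,m)}$: the coefficient $a_{(w,\tau),(v,t)}$ equals $a_{(v,t),(-w,\tau)}$, so the rows indexed by the kept equations correspond to the columns indexed by the set $I$ of Proposition \ref{kern}, which are known to be linearly independent (Corollary \ref{corokern}); it then combines this with the identity $\sum_{v,t}x_{v,t}=q^{2m}$ and Theorem \ref{ajout}. You instead show directly that the homogeneous system attached to $(S)$ has trivial kernel: from the untouched block $w=0$ you get $\Sigma=\sum_{v,t}y_{v,t}=0$, and since for each fixed $w\neq 0$ the hyperplanes $E_{w,\tau}$ partition ${\mathbb F}_{q^m}\times{\mathbb F}_q$, the deleted homogeneous equation $(w,0)$ is the difference between $\Sigma$ and the kept equations $(w,\tau)$, $\tau\neq 0$; hence any solution of $(S_0)$ lies in ${\rm Ker}(T_{(q,m)})$, and Theorem \ref{kernel} together with the added normalizations forces it to vanish. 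Your argument is more elementary and self-contained: it needs only Theorem \ref{kernel} and the partition structure of the parallel hyperplanes, and it avoids Proposition \ref{kern}, Corollary \ref{corokern} and the row--column duality entirely, while also making fully explicit a consistency point that the paper's proof passes over somewhat quickly (namely, why a solution of the truncated system is automatically a solution of the full system). What the paper's approach buys in exchange is the structural observation that ${\mathcal T}_{(q,m)}$ is essentially symmetric under $(w,\tau)\mapsto(-w,\tau)$, which identifies the ``good'' rows with the ``good'' columns and is of independent interest.
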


\begin{proof}
Let ${\mathcal T}_{(q,m)}$ the matrix of the original system.
The columns are the vectors $T_{(q,m)}(e_{v,t})=\epsilon_{v,t}$ decomposed
on the basis $(e_{v,t})_{(v,t)\in {\mathbb F}_q \times {\mathbb F}_q}$. 

Let us consider
the columns $(v,t)$ for which one of the two following conditions holds:
\begin{enumerate}
\item $v=0$;
\item $v\neq 0 \hbox{ and } t \neq 0$.
\end{enumerate}
Denote by $I$ these indexes.
We know by Lemma \ref{corokern} that these $q^{m+1}-(q^m-1)$ columns are linearly independent.  

Denote by $a_{(w,\tau),(v,t)}$ the coefficient of ${\mathcal T}_{(q,m)}$
which is at the line indexed by $(w,\tau)$ and the column indexed by $(v,t)$.
This coefficient is the component of $\delta_{v,t}$ on $e_{w,\tau}$, namely by Lemma \ref{evt}:
$$
a_{(w,\tau),(v,t)}=\left\{
\begin{array}{ccc}
1 & \hbox{ if } & (w,\tau)\in E_{-v,t}\\
0 & \hbox{ if } & (w,\tau)\notin E_{-v,t}
\end{array}
\right .
.
$$ 
But as the relation $(w,\tau)\in E_{-v,t}$ is equivalent to $(v,t)\in E_{w,\tau}$
we have $$a_{(w,\tau),(v,t)}=a_{(v,t),(-w,\tau)}.$$
Then the elements of line $((w,\tau)$ are the elements of the column $(-w,\tau)$
By Proposition \ref{kern} the $q^{m+1}-(q^m-1)$ lines indexed by $(w,\tau)$ where
$w \neq 0$ and $t\neq 0$,  or $w=0$, are linearly independent.

Remark that the original system has a vector space of dimension $q^m-1$ of solutions
$(x_{v,t})_{(v,t) \in {\mathbb F}_q^m\times {\mathbb F}_q}$. Adding all  equations of the system
gives the following equality:
$$\sum_{v,t} x_{v,t} =q^{2m}.$$ 
Then if we suppose that the $q^m-1$ conditions
$$\sum_{t\in{\mathbb F}_q} x_{v,t}=q^m,$$
where $v \neq 0$, are satisfied,  the last condition 
$$\sum_{t\in{\mathbb F}_q} x_{0,t}=q^m$$
is also satisfied. Now, using Theorem \ref{ajout}, we conclude that 
$(S)$ is a Cramer linear System.
\end{proof}

\begin{remark}
From the definition it follows that
$$
N_{v,t}={\rm card}\left(\{w\in {\mathbb F}_{q^m} \mid (v,t)\in E_{w,f(-w)}\}\right).
$$
So, it would be interesting to consider the arrangement of hyperplanes ${\mathcal A}(f)$,
consisting of the $q^m$ hyperplanes $E_{w,f(-w)}$ and to relate the geometric and combinatorial properties
of ${\mathcal A}(f)$ to the properties of the distance between $f$ and the affine functions.
A very simple example is the following: if the arrangement ${\mathcal A}(f)$ is centered, 
then there is a $(v,t)$
such that $N_{v,t}=q^m$ and consequently the function $f$ is affine.
\end{remark}

\end{document}